\newcommand{\ket}[1]{|#1\rangle}
\newcommand{\proj}[1]{|#1\rangle\!\langle#1|}
\newcommand{\Id}{{\mathbb I}}
\newcommand{\Tr}{{\textrm {Tr} }}
\newcommand{\diag}{{\mathrm {diag}}}
\newcommand\scalemath[2]{\scalebox{#1}{\mbox{\ensuremath{\displaystyle #2}}}}
\def\E{ {\cal E} }
\def\C{ {\cal C} }
\def\D{ {\cal D} }
\def\M{ \mathscr{M}}
\def\S{ {\cal S}}
\def\I{ {\cal I}}
\def\L{ {\cal L}}
\def\Q{ {\cal Q}}
\def\U{ {\cal U}}
\def\V{ {\cal V}}
\def\W{ {\cal W}}
\def\P{ {\cal P}}
\def\MM{{{\mathscr M}_{\!s, \hat{\boldsymbol{n}}}}}
\def\PP{\P_{\!\scalemath{0.6}{\mathscr M}_{\!s, \hat{\boldsymbol{n}}}}}
\def\UU{\U_{\scalemath{0.6}{\mathscr M}_{\!s, \hat{\boldsymbol{n}}}}}
\def\CC{\C_{\scalemath{0.6}{\mathscr M}_{\!s, \hat{\boldsymbol{n}}}}}
\def\DD{\D_{\E_{u}}}
\theoremstyle{plain}
\newtheorem{theorem}{Theorem}
\newtheorem{lemma}[theorem]{Lemma}
\newtheorem{remark}[theorem]{Remark}
\theoremstyle{definition}
\newtheorem{definition}{Definition}
\definecolor{darkgreen}{RGB}{0,175,0}
\definecolor{ppblue}{RGB}{46,117,182}
\definecolor{ppred}{RGB}{197, 90, 17}
\begin{document}

\title{Measurement sharpness and disturbance tradeoff}
%\date{\today}

\author{Nayere Saberian}
\email{n.saberian.qi@gmail.com}
\affiliation{Department of Physics, Ferdowsi University of Mashhad, Mashhad, Iran}
\author{Seyed Javad Akhtarshenas}
\email{akhtarshenas@um.ac.ir}
\thanks{Corresponding author}
\affiliation{Department of Physics, Ferdowsi University of Mashhad, Mashhad, Iran}
\author{Fereshte Shahbeigi}
\email{fereshte.shahbeigi@uj.edu.pl}
\affiliation{Faculty of Physics, Astronomy and Applied Computer Science, Jagiellonian University, 30-348 Krakow, Poland}

\begin{abstract}
Obtaining information from a quantum system through a measurement typically disturbs its state. The postmeasurement states for a given measurement, however, are not unique and highly rely on the chosen measurement model, complicating the puzzle of information-disturbance.  Two distinct questions are then in order. Firstly, what is the minimum disturbance a measurement may induce? Secondly, when a fixed disturbance occurs, how informative is the possible measurement in the best-case scenario? Here, we propose various approaches to tackle these questions and provide explicit solutions for the set of unbiased binary qubit measurements and postmeasurement state spaces that are equivalent to the image of a unital qubit channel. In particular, we show there are different tradeoff relations between the sharpness of this measurement and the average fidelity of the premeasurement and postmeasurement state spaces as well as the sharpness and quantum resources preserved in the postmeasurement states in terms of coherence and discord-like correlation once the measurement is applied locally.
\end{abstract}

\maketitle

%-----------------------------------------------------------------------------------------
% Section 1 - INTRODUCTION
%-----------------------------------------------------------------------------------------
\section{Introduction}\label{Sec-Introduction}

The no-information-without-disturbance theorem is a significant result of quantum mechanics, which asserts that acquiring information from quantum systems is necessarily at the cost of disturbing their states. Extensive research has been done into the tradeoff between the obtained information and consequent disturbance in a system \cite{asher-info-dist,banaszek-inf-dist,banaszek-inf-dist-2,Daariano-info-dist,maccone-info-dist,sciarrino-info-dist,sacchi-inf-dist,buccemi-info-dist,werner-info-dist,horodecki-info-dist,Busch2009,chribella-inf-dist,yong-info-dist,hiroaki-info-dist,kumar-info-dist}.

In order to retrieve information, one has to measure the system. Mathematically speaking, a quantum measurement is a map from the set of quantum states to the classical ones, namely probability distributions. This definition generally leaves the system under the measurement with various potential postmeasurement states depending on the adopted measurement strategy \cite{Busch2016}, see Fig.~\ref{fig:indirect-measurement} for an approach called indirect measurement \cite{ozawa}. The non-uniqueness of the postmeasurement states, however, introduces additional difficulties in studying the disturbance within the state space caused by measurements.

In a series of papers \cite{Heinosaari_2016,Taiko2013,TeikoFP2014,Teiko2018}, authors proposed  the seminal concept of channel-measurement compatibility to address the non-uniqueness problem. This approach provides a framework to consider both; all possible postmeasurement states for a given measurement and all possible measurements that may induce the same disturbance, i.e., the same postmeasurement states.

Roughly speaking, any method employed to measure a quantum measurement $\M$ leads to a specific overall disturbance in the state space that is equivalent to applying a quantum channel $\E_\M$ on the states.  Henceforth, such a channel is referred to as being compatible with the measurement \cite{Heinosaari_2016,Taiko2013,TeikoFP2014,Teiko2018}.  The set $\C_\M=\{\E_\M\}$ of all channels  compatible with a given measurement $\M$, therefore, encapsulates all the possible disturbances that measuring $\M$ may induce.

Conversely, the set $\D_\E=\{\M_\E\}$ of all possible compatible measurements with  a channel $\E$ indicates what measurements can be implemented,  and therefore, how much information can be extracted, at the cost of the same disturbance in the states. In fact, any quantum channel $\E$ admits different decompositions in terms of some completely positive and trace non-increasing operations $\Phi_i$, i.e., $\E=\sum\Phi_i$ for different sets of $\{\Phi_i\}$. Different Kraus representations are an example of this notion. Each of such sets yields a set of possible output states with some associated probabilities. These probabilities can be assigned to a quantum measurement $\M_\E$, which we called compatible with the quantum channel $\E$. Trivially, different decompositions can give rise to different measurements $\M_\E$.

It is worth mentioning that while this approach provides a structural framework, that works independently of any specific quantification of disturbance-information relation \cite{Taiko2013,TeikoFP2014,Teiko2018}, the question of characterising the sets $\C_\M$ for a given $\M$ and $\D_\E$ for a given $\E$ is generally open. The necessary and sufficient condition for compatibility of quantum measurements and quantum channels is known only for binary and unbiased qubit measurements and unital qubit channels \cite{Teiko2018}.

Nevertheless, a general consensus is that the most informative measurements are included in the set of projective ones, i.e., sharp measurements. Thus, adding noise to these measurements makes them unsharp and less informative with the possibility of inducing less disturbance in the outputs. There are different results on the postmeasurement disturbance, however to our best knowledge, there is no result quantifying such a relation between  sharpness and disturbance.

Here, we address this question by applying the notion of measurement-channel compatibility in the set of unbiased binary qubit measurements and unital qubit channels. For this purpose, we first present, explicitly, the known generalisation of channel-measurement compatibility from the set of Pauli channels to the set of unital qubit ones mentioned in Ref. \cite{Teiko2018}. Next, for the aforementioned sets of channels and measurements, we will show there are some tradeoff relations between sharpness of  the measurement and quantum properties of the postmeasurement states. These properties are considered in three different ways; (i) their average fidelity with premeasurement states, (ii) their coherence, and (iii) their discord-like correlations. The first one quantifies the proximity between pre- and postmeasurement states, whereas, the latter two characterise the extent to which quantum features are preserved in states after being measured.

We show the sharper the measurement is, the more quantum properties are lost in the postmeasurement states, thus, more disturbance is induced. Our results confirm that in the extreme case of a projective measurement, which is completely sharp, all the possible postmeasurement states, independent of the realisation method of the measurement, are completely classical. On the other extreme, when the measurement effects are proportional to identity, thus no information can be extracted from the systems, one may adopt a measuring method that preserves the quantum properties of all states after the measurement.

The paper is organised as follows. We bring in Section \ref{sec:preliminaries}  the preliminary information and mathematical background for the problem we address. The main results are presented in Section~\ref{section:results}, where we first state an explicit approach for the known generalisation of the channel-measurement compatibility to the set of unital qubit channels and then use this  generalisation to investigate the sharpness-disturbance tradeoffs in two separate subsections. The paper is concluded in Section~\ref{sec:Conclusion}, and the proof of some of the results is presented in two appendices.

\begin{figure}
    \centering
    \includegraphics[width=\columnwidth]{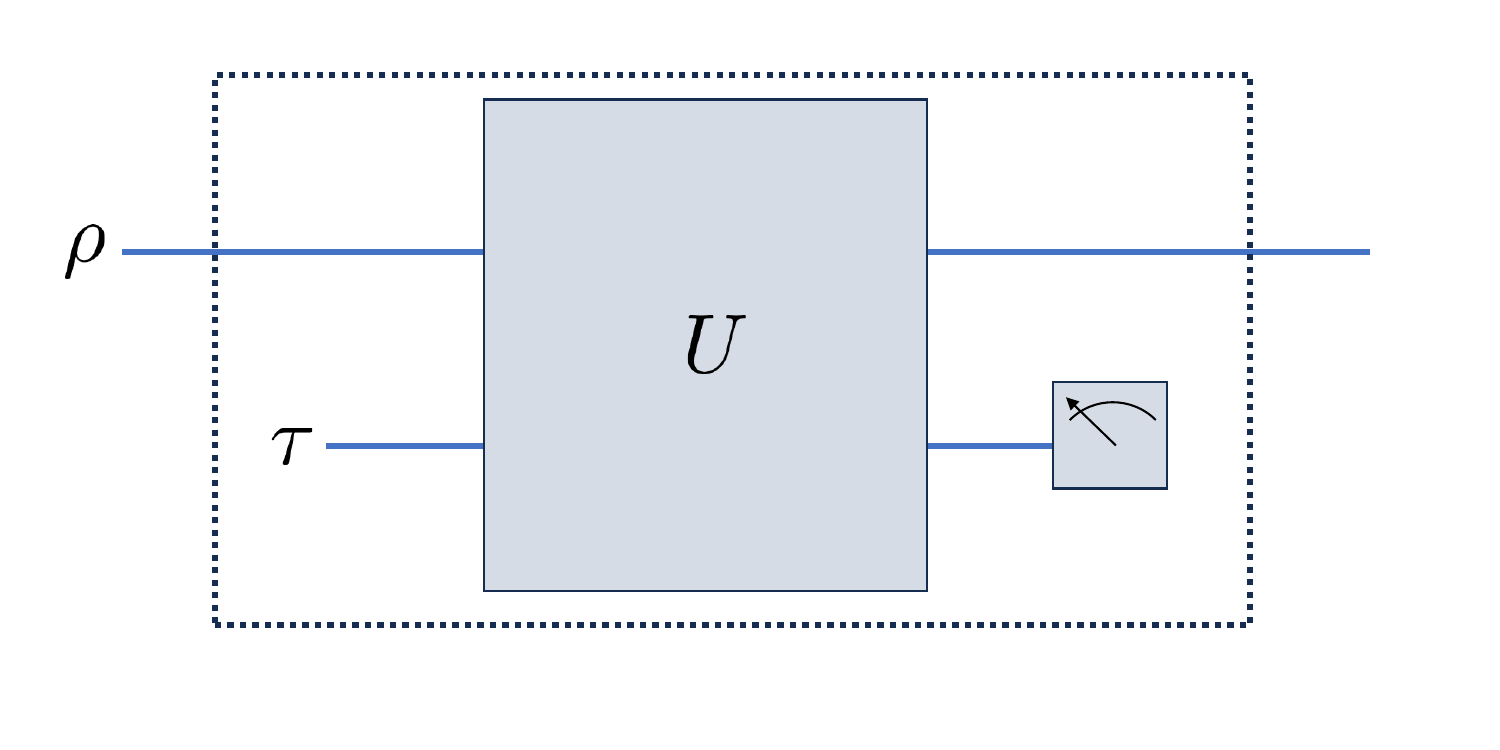}
    \caption{To apply a measurement $\M$ on state $\rho$, one can apply a joint unitary evolution on $\rho$ and some ancillary system, then measure the ancillary system by some pointer observable \cite{ozawa}, see also Naimark's dilation theorem. This setup contains non-uniqueness in the postmeasurement states. For example, both unitary operators $U_1$ equal to the swap operator and $U_2=\proj0\otimes\Id+\proj1\otimes\sigma_x$ for $\tau=\proj0$ model the same projective measurement $\M=\{\proj0,\proj1\}$. Then, the postmeasurement state space in the first scenario is independent of $\rho$ and includes only $\proj0$, while in the latter is the diameter of the Bloch ball on the $z$-axis.}
    \label{fig:indirect-measurement}
\end{figure}

%-----------------------------------------------------------------------------------------
% Section 2 - PRELIMINARIES
%-----------------------------------------------------------------------------------------
\section{Preliminaries}
\label{sec:preliminaries}
In the following three subsections, we review the elementary definitions and bring the mathematical background  on quantum measurements, quantum channels, and their compatibility.
\subsection{Quantum measurements}
\label{subsection:measurements}
An $N$-outcome quantum measurement $\M$, also known as a positive operator-valued measurement (POVM) and an observable in some literature, is a set of positive semidefinite operators satisfying the complementary relation, i.e.,  $\M=\{M_i\}_{i=0}^{N-1}$ where $M_i\geq0$ and $\sum M_i = \Id$. A quantum measurement provides a map from the set of quantum states to classical ones, meaning that measuring $\M$ on a quantum state $\rho$ results in   the probability distribution $p_i=\Tr[\rho M_i]$ for  the outcome $i$. The positivity of the measurement elements implies the non-negativity of probabilities $p_i$  and the  complementary relation guarantees that they sum  up to one.   A special POVM, satisfying the orthogonality condition $M_i M_j = \delta_{ij} M_i$ for all $i,j$,  is called a  projective-valued measurement, or a projective measurement for brevity.

Two characteristics for a given POVM are primarily considered in this paper, namely biasedness and sharpness \cite{BuschFP2009}. A quantum measurement $\M$ is called unbiased if $\Tr{M_i}=d/N$ for all outcomes $i$.  The sharpness, on the other hand, defines how closely  $\M$ resembles a projective measurement. Different measures for biasedness and sharpness (sometimes unsharpness) were introduced \cite{MassarPRA2007, BuschFP2009, BaekSR2016, LiuPRA2021, MitraIJTF2022}.

Here, we are mainly concerned with unbiased binary (two-outcome) qubit measurements. Such a POVM is denoted by  $\MM=\{M_+,M_-\}$ where its two effects are parametrised by
\begin{eqnarray}\label{eqn: POVM elements}
M_{\pm}=\frac{1}{2}(\Id\pm s \hat{\boldsymbol{n}}\cdot \boldsymbol{\sigma}),
\end{eqnarray}
in which $s \in [0,1]$  carries the sharpness information and $\hat{\boldsymbol{n}} \in \mathbb{R}^3$, with $\|\hat{\boldsymbol{n}}\|=1$,  represents its direction.
Being  positive operators of unit trace,  $M_{\pm}$  correspond to a pair of points inside the Bloch ball, laying symmetrically around the origin   on a diameter defined by $\hat{\boldsymbol{n}}$.

Trivially, $\MM$ for $s=1$ is a projective measurement and $s=0$ gives a maximally noisy one. Generally, the sharpness of  $\MM$ can be quantified by $s$ or some monotone functions of $s$ such as $s^2$ which is in agreement with the unsharpness quantifiers defined in  \cite{LiuPRA2021, MitraIJTF2022}.  In Appendix \ref{sharpness-measures} we see that for the unbiased binary qubit measurements,  $1-s^2$ is equal to the unsharpness defined in  \cite{LiuPRA2021, MitraIJTF2022}.

%-----------------------------------------------------------------------------------------
\subsection{Quantum channels}
\label{subsection:channels}
A quantum channel is a completely positive and trace-preserving map acting on quantum states.
A Pauli channel $\E_{\vec{p}}$ is a special example of a  qubit channel which is defined by
\begin{eqnarray}
\label{eq:pauli-channel}
    \E_{\vec{p}}(\rho)=\sum_{j=0}^{3} p_j \sigma_j \rho \sigma_j,
\end{eqnarray}
where $\sigma_0$ is the identity matrix and $\sigma_j$ for $j\in\{1,2,3\}$ is a Pauli matrix. Also, the probability vector $\vec{p}\in \mathbb{R}^4$ satisfies  $0\leq p_i \leq 1$ and $\sum_{i=0}^{3} p_i=1$. Hence, a regular tetrahedron in $\mathbb{R}^3$ \cite{ZyczkowskiBook2017} can represent the set of Pauli channels. The set of Pauli channels, denoted by $\P$, is a measure zero subset of the unital ones, denoted by $\U$, which map identity to itself.

Three different features of quantum channels are considered in this paper; (i) input-output fidelity \cite{karimipourPRA2020,Shahbeigi2021}, (ii) quantumness \cite{ShahbeigiPRA2018},  and (iii) local quantum uncertainty \cite{GirolamiPRL2013} of the Choi-Jamio{\l}kowski state \cite{JAMIOLKOWSKI,CHOI}. In what follows we  briefly review these properties.

{\it (i) Input-output fidelity:}
The input-output fidelity is a measure of the performance of a quantum channel obtained by taking the fidelity between input and their corresponding output states after evolving by $\E$ averaged over  the Haar measure of pure states
\begin{align}
\label{eq:fidelity}
    \overline{F}_{\E}=\int\langle\psi|\E(\proj\psi)|\psi\rangle \mathrm{d}\psi.
\end{align}
This is equal to one only for the identity channel that does nothing on the state space. Generally, for $\E$ acting on $d$-dimensional states, this quantity respects the bounds $\frac{1}{d+1}\leq \overline{F}_{\E}\leq1$. Moreover, one may improve  this fidelity for a channel $\E$ by concatenating it with some other channels. In that case, the \emph{corrected fidelity},  $\overline{\mathscr{F}}_\E$, is given by \cite{karimipourPRA2020,Shahbeigi2021}
\begin{align}
\label{eq:corrected-fidelity}
     \overline{\mathscr{F}}_\E&:=\max_{\E'}\overline{F}_{(\E'\circ\E)}= \max_{\E''}\overline{F}_{(\E\circ\E'')}\\
     &= \max_{\E',\E''}\overline{F}_{(\E'\circ\E\circ\E'')}.\nonumber
\end{align}
For a Pauli channel $\E_{\vec{p}}$, these fidelities are \cite{karimipourPRA2020,Shahbeigi2021}
\begin{align}
    \overline{F}_{\E_{\vec{p}}}&=\frac13(1+2p_0),\label{eq:fidelity-pauli}\\
     \overline{\mathscr{F}}_{\E_{\vec{p}}}&=\overline{F}_{(\W_m\circ\E_{\vec{p}})}=\frac13(1+2p_m),\label{eq:corrected-fidelity-pauli}
\end{align}
where $p_m=\max\{p_0,p_1,p_2,p_3\}$ and $\W_m(\rho)=\sigma_m\rho\sigma_m$ is the  unitary Pauli channel corresponding to $p_m$.

%-----------------------------------------------------------------------------------------

{\it (ii) Quantumness of quantum channels:}
 Quantumness of  a channel $\E$  is defined  as the  average quantum coherence of the state space after the channel acts on, minimised over all orthonormal basis sets \cite{ShahbeigiPRA2018}
 \begin{align}
 \label{eq:quantumness}
     \Q_\E=N_C\min_{\{\ket i\}}\int C\!\left(\E(\rho)\right)\mathrm{d}\mu(\rho),
 \end{align}
 where $C$ is a proper coherence measure and $N_C$ is its associated normalisation constant chosen such that for the identity map the quantumness achieves unity.  Trivially, $\Q_\E$ quantifies the coherence preserved in the states undergoing a  channel. It is restricted to $0\leq\Q_\E\leq1$. The upper and lower bounds are respectively satisfied with equality only by a unitary channel and  a quantum-classical (q-c) one \cite{Holevo_qc} which  completely erases the quantum coherence of all states \cite{ShahbeigiPRA2018}. Moreover, it was shown \cite{ShahbeigiPRA2018} that for any channel $\E$
 \begin{equation}
 \label{eq:quantumness-invariance}
 \Q_\E=\Q_{(\V_2\circ\E\circ\V_1)},
 \end{equation}
 where $\V_1$ and $\V_2$ are  any pair of unitary channels.

Using the squared $l_1$-norm of coherence \cite{coherencePRL} as the coherence measure, the  quantumness of the Pauli channels then reads \cite{ShahbeigiPRA2018}
\begin{eqnarray}
\label{Q-PauliChannels}
    \mathcal{Q}_{\E_{\vec{p}}}=(p_0^{\downarrow}-p_1^\downarrow)^2+(p_2^\downarrow-p_3^\downarrow)^2,
\end{eqnarray}
where  $\vec{p}^\downarrow$ is obtained by rearranging $\vec{p}$  components in the non-increasing order.

{\it (iii) Local quantum uncertainty (LQU) of the Choi-Jamio{\l}kowski state:} Another approach to factor the quantum features of a channel in is to consider the quantum correlation it leaves in a bipartite state when acts locally on it.  A natural choice for such a bipartite state is a maximally entangled one. Thus, the quantum correlation of the Choi-Jamio{\l}kowski state associated with   a quantum channel reveals the quantum features of the channel. Using the notion of local quantum uncertainty \cite{GirolamiPRL2013}, as a measure of nonclassical correlation, we measure the quantum correlation of the Choi-Jamio{\l}kowski state associated with a given channel.

The LQU  is defined as the minimum skew information \cite{WignerYanasePNAS1963} achievable on a single local measurement and   has a closed form with respect to subsystem $A$ of a  qubit-qudit state $\varrho_{AB}$ as \cite{GirolamiPRL2013}
\begin{eqnarray}\label{LQU-A}
    \L_{\varrho_{AB}|A}=1-\lambda_{\max}\{W_{AB}\}.
\end{eqnarray}
Above   $\lambda_{\max}$ denotes the maximum eigenvalue of the $3\times 3$ symmetric matrix $W_{AB}$  whose entries for $i,j\in\{1,2,3\}$ are defined by
\begin{eqnarray}
    (W_{AB})_{ij}=\Tr\{\sqrt{\varrho_{AB}} (\sigma_{i}\otimes \mathbb{I}) \sqrt{\varrho_{AB}} (\sigma_{j}\otimes \mathbb{I})\}.
\end{eqnarray}

It is straightforward to show that LQU of the Choi-Jamio{\l}kowski state $\varrho_\E$ of a  quantum channel $\E$ satisfies $0\leq\L_{\varrho_\E|A}\leq 1$ with equality on lower-bound only for q-c channels and on upper-bound only for the unitary ones. Also, this measure is invariant under unitary transformations of the map, i.e., for all channels $\E$, and unitary maps   $\V_1$ and $\V_2$
\begin{equation}
\label{eq:lqu-invariance}
    \L_{\varrho_\E|A}=\L_{\varrho_{(\V_2\circ\E\circ\V_1)}|A}.
\end{equation}

For a Pauli channel $\E_{\vec{p}}$, the Choi-Jamio{\l}kowski state is given by
\begin{align}
\varrho_{\E_{\vec{p}}}:&=(\E_{\vec{p}}\otimes\I)\proj{\psi_+}\\ \nonumber
       &=\frac{1}{2}\begin{pmatrix}
        p_0+p_3 & 0 & 0 & p_0-p_3 \\
        0 & p_1+p_2 & p_1-p_2 & 0 \\
        0 & p_1-p_2 & p_1+p_2 & 0 \\
         p_0-p_3 & 0 & 0 & p_0+p_3
    \end{pmatrix},
\end{align}
where $\I$ is the identity channel and $\ket{\psi_+}=(\ket{00}+\ket{11})/\sqrt{2}$ is a maximally entangled state.  The Choi-Jamio{\l}kowski state $\varrho_{\E_{\vec{p}}}$ is therefore  a Bell-diagonal state for which $W_{AB}=\diag\{P_1,P_2,P_3\}$ where
\begin{eqnarray}\label{Pi}
{P}_i=2(\sqrt{p_0 p_i}+\sqrt{p_j p_k}),
\end{eqnarray}
and  $(i,j,k)$ are different choices of $(1,2,3)$. The LQU of $\varrho_{\E_{\vec{p}}}$ then reads
\begin{eqnarray}\label{eq:lequ}
    \L_{\varrho_{\E_{\vec{p}}}}=1-P_{\max},
\end{eqnarray}
with $P_{\max}=\max\{P_1,P_2,P_3\}$ and quantifies the amount of discord-like correlation preserved in a maximally entangled state evolved by a Pauli channel.
Above, we dropped $A$ in the subscript for brevity because of the symmetry that Bell diagonal states possess on quantum correlation with respect to subsystems $A$ and $B$, i.e., $ \L_{\varrho_{\E_{\vec{p}}}}= \L_{\varrho_{\E_{\vec{p}}}|A}= \L_{\varrho_{\E_{\vec{p}}}|B}$.
%-----------------------------------------------------------------------------------------

\subsection{Channel-measurement compatibility}
\label{subsection:Incompatibility}
The concept of compatibility of a channel and a quantum measurement is aimed at aggregating possible disturbances a given measurement may cause and possible measurements may be applied with the same disturbance. The formal definition of channel-measurement compatibility is brought in the next definition.
\begin{definition}[Channel-measurement compatibility \cite{Teiko2018}]
\label{def:compatibility}
    A channel $\E$ and a meausrement $\M=\{M_i\}$ are compatible if there exist completely positive trace-nonincreasing operations $\Phi_i$ such that for any $i$ and any input state $\rho$
    \begin{equation}
        \E(\rho)=\sum\Phi_i(\rho)\quad\mathrm{and}\quad\Tr[M_i\rho]=\Tr[\Phi_i(\rho)].
    \end{equation}
\end{definition}

It has been shown in Ref. \cite{Teiko2018} that the measurement  $\MM$ is compatible with the  Pauli channel $\E_{\vec{p}}$ if and only if it holds that
\begin{eqnarray}
\label{eqn:Incompatibilityinequality}
    \frac{s^2 n_1^2}{{P}_1^2}+\frac{s^2 n_2^2}{{P}_2^2}+\frac{s^2 n_3^2}{{P}_3^2}\leq 1,
\end{eqnarray}
where $P_i$'s are given in Eq.~\eqref{Pi}. For this inequality to hold, if ${P}_i= 0$, its corresponding term must necessarily vanish.

For a given   $\MM$, we denote by $\PP\subset\CC$ the subset of the Pauli channels in the set of its compatible channels. Similarly,  we  refer to $\D_{\E_{\vec{p}}}$ as the set of unbiased binary qubit measurements compatible with  a given Pauli channel $\E_{\vec{p}}$. For future reference, we mention that $\UU$ and $\DD$ denote the compatible set of unital qubit channels with a given $\MM$ and the set of  measurements $\MM$ compatible with a unital qubit channel $\E_u$, respectively.

An immediate consequence of the inequality \eqref{eqn:Incompatibilityinequality}  is that if it holds for a POVM  with orientation  $\hat{\boldsymbol{n}}$ and  sharpness $s$, it also holds for all POVMs having the same direction but smaller sharpness  \cite{Teiko2018}.
Remarkably,  the set of channels compatible with a sharp POVM in the $\hat{\boldsymbol{n}}$-direction is also compatible with a POVM  having  the same direction, but arbitrary sharpness. That is to say,  $\P_{\!\scalemath{0.6}{\mathscr M}_{1, \hat{\boldsymbol{n}}}}\subseteq\PP\subseteq\P_{\!\scalemath{0.6}{\mathscr M}_{\!s', \hat{\boldsymbol{n}}}}$ for any $s^\prime \le s$.

By applying the Lagrange multiplier method, one can show ${P}_i\le 1$ for any $i$ where  the equality holds if  and only if  $p_0=p_i$ and $p_j=p_k$. Moreover, for  $P_{\max}=\max \{P_1,P_2,P_3\}$, the  compatibility inequality \eqref{eqn:Incompatibilityinequality}  gives rise to
\begin{eqnarray}\label{s<Pmax}
s\le {P}_{\max}\leq 1.
\end{eqnarray}
Both bounds of the above inequality are tight, meaning that in the set of compatible Pauli channels for any measurement $\MM$, there always exists a channel satisfying the right inequality and a channel satisfying the left one. The maximally depolarizing channel, located at the center of the tetrahedron of Pauli channels with $p_j=1/4$ for all $j$, satisfies the right inequality. As for the left inequality, there are four channels in $\PP$ introduced in the following remark.

\begin{remark}\label{Remark1}
For any measurement $\MM$, with its effects given by Eq. \eqref{eqn: POVM elements}, there exist at least four compatible channels $\E_{\vec{p}}$ on the boundary of $\PP$ where ${P}_i$, mentioned in Eq.~\eqref{Pi}, is equal to $s$ for any $i$. These channels are characterised by a probability vector $\vec{p}$ with $p_0=1-3p_1$, $p_1=p_2=p_3=\frac{1}{8}(1+s-\sqrt{1+2s-3s^2})$, and three different permutations of this vector.
\end{remark}

On the other hand, for a fixed Pauli channel $\E_{\vec{p}}$, that is for a fixed $P_{\max}$, there always exists a compatible measurement in $\D_{\E_{\vec{p}}}$ which satisfies with equality the left-hand side of the Eq.~\eqref{s<Pmax}. Trivially, this is the measurement with sharpness equal to $P_{\max}$ along the $i$-th principal axis determined by the $i$ that maximises $P_i$.

\begin{figure}[t]
\centering
\includegraphics[width=\columnwidth]{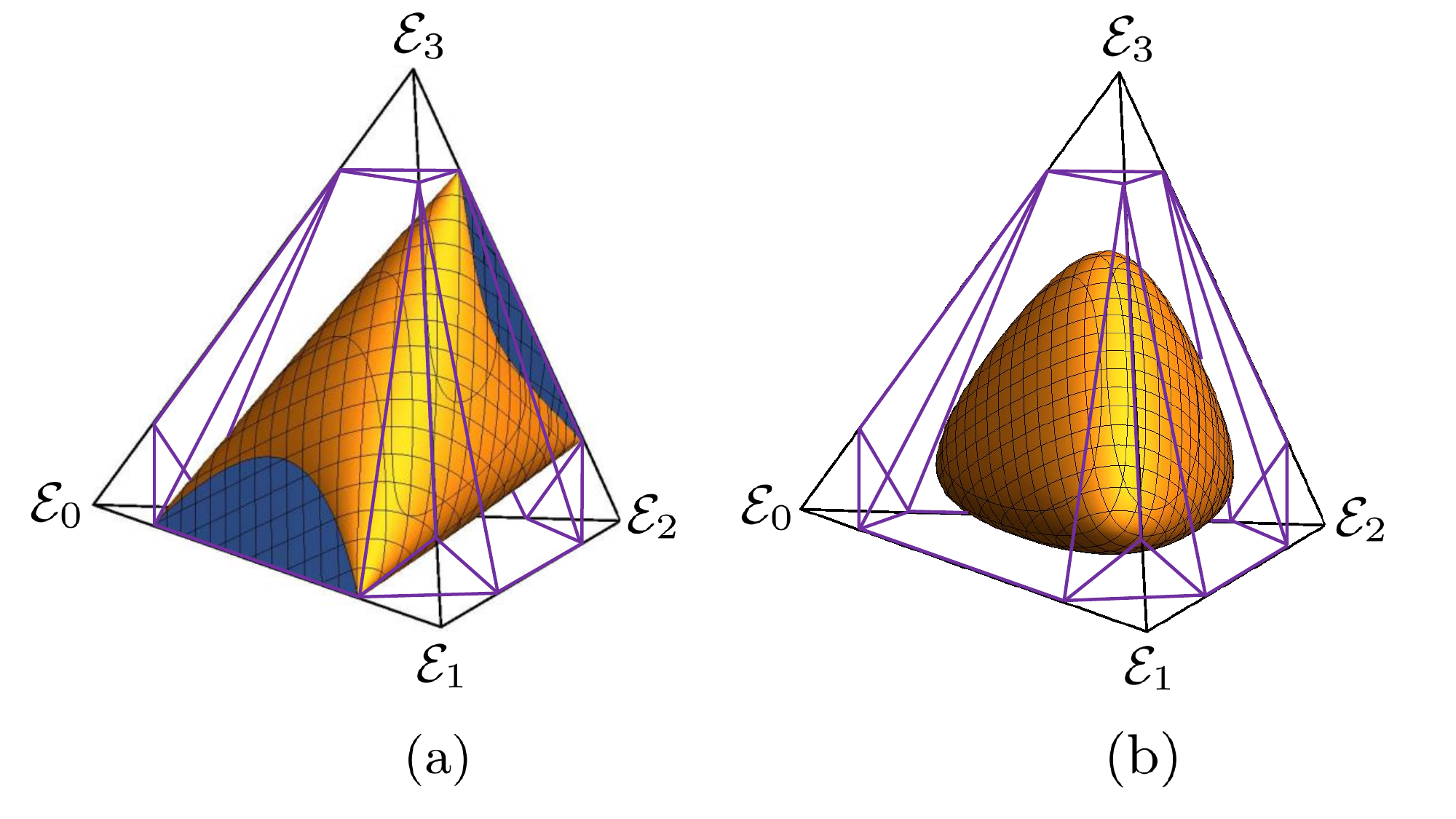}
    \caption{The tetrahedron of Pauli channels and the subset of compatible channels with a measurement $\MM$ where  $s=0.85$ and (a)   $\hat{\boldsymbol{n}}=\hat{\boldsymbol{x}}=(1,0,0)$, (b) $\hat{\boldsymbol{n}}=\frac{1}{\sqrt{3}}(1,1,1)$. Here, $\E_{i} (\rho)=\sigma_i\rho\sigma_i$  for $i=0,1,2,3$ is a unitary Pauli channel. The compatibility polytope $\Delta_s$ introduced in Definition \ref{def:polytope} is also presented and, as proved in Lemma~\ref{Lemma-Polytope}, it is a superset of the compatible set of Pauli channels.}
\label{channel}
\end{figure}
For a given measurement $\MM$ the set $\PP$, obtained through Eq. \eqref{eqn:Incompatibilityinequality},   is   represented  by the intersection of a deformed tetrahedron and the tetrahedron of Pauli channels, see Fig.~\ref{channel}. In specific cases, this set collapses to a line or a point. To see this, note that from
$1/{P}_i^2\geq 1$ we get   $\sum_{i=1}^3 n_i^2/{P}_i^2\geq 1$. Accordingly, in the particular case of a sharp POVM, i.e.,    $s=1$,   the inequality  \eqref{eqn:Incompatibilityinequality} reduces to the following equality
\begin{equation}\label{eqn:IncompatibilityinequalityPVM}
    \frac{n_1^2}{{P}_1^2}+\frac{n_2^2}{{P}_2^2}+\frac{n_3^2}{{P}_3^2}=1.
\end{equation}
The above shows the convex combination of some numbers $1/P^2_i\geq1$ is equal to one. This holds only if,  for each nonzero component $n_i$ of a sharp POVM, we have $P_i=1$. Now, if a sharp measurement  is directed along a principal axis, say $\hat{\boldsymbol{x}}=(1,0,0)$, then ${P}_1=1$. According to the above discussion, it holds for all the channels with $p_0=p_1$ and $p_2=p_3$ where $p_0+p_2=1/2$. This geometrically leads to a line segment across two opposite edges of the Pauli tetrahedron. On the other hand, when $\hat{\boldsymbol{n}}$ has at least two nonzero components, two of the $P_i$'s are equal to one. This enforces the third one to be equal to one as well. Therefore,  Eq. \eqref{eqn:IncompatibilityinequalityPVM}  has a single solution $P_i=1$ for $i=1,2,3$, occurring at  the center of the tetrahedron, i.e.,  $p_0=p_1=p_2=p_3=1/4$.

Alternatively, the set $\D_{\E_{\vec p}}$ of  POVMs compatible with a given channel $\E_{\vec{p}}$ is generally described as an ellipsoid inside the Bloch ball with semi-axes ${P}_1$, ${P}_2$, and ${P}_3$, see Eq.~\eqref{eqn:Incompatibilityinequality}. The ellipsoid coincides with the whole ball if and only if  $P_i=1$ for all $i$, or equivalently   $p_j=1/4$ for all  $j$. It collapses to the center of the Bloch ball if and only if $P_i=0$ for all $i$, or equivalently $p_j=\delta_{jk}$ for some $k$, i.e., when the Pauli channel becomes a unitary evolution.

%-----------------------------------------------------------------------------------------
% Section 3 - Results
%-----------------------------------------------------------------------------------------
\section{Results and discussion}
\label{section:results}
In this section,  we will present  our results on the following two questions; (i) what is the minimum possible disturbance a given measurement with sharpness $s$ may induce?, and (ii) given a fixed disturbance in postmeasurement state space, what is the sharpest measurement one might have applied?, i.e., how close could one have come to achieving a projective measurement?
Before presenting our main points, we note that to use the generalisation of channel-measurement compatibility from the set of Pauli channels, $\P$, to the set of unital qubit ones, $\U$, mentioned in Ref. \cite{Teiko2018}, we first need an explicit presentation of the generalisation in order to become suitable for our purpose. Therefore,  we will apply the following lemma from Ref.~\cite{Teiko2018}.
\begin{lemma}[Proposition 1 of Ref. \cite{Teiko2018}]
\label{lem: unitary-compatibility}
A quantum channel $\E$ is compatible with the measurement $\M$ if and only if $\V_2\circ\E\circ\V_1$, for two unitary channels $\V_j(\rho)=V_j\rho V_j^\dagger$ ($j=1,2$), is compatible with the POVM $V_1^\dagger\M V_1:=\{V_1^\dagger M_i V_1\}$.
\end{lemma}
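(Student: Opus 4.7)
The plan is to prove both directions by an explicit construction of the compatibility witnesses, exploiting the fact that unitary channels are invertible and trace-preserving, hence can be absorbed freely into the completely positive trace-non-increasing operations guaranteed by Definition~\ref{def:compatibility}. The whole argument reduces to bookkeeping; no hard analysis is needed.

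For the forward direction I would assume that $\E$ and $\M=\{M_i\}$ are compatible, so by Definition~\ref{def:compatibility} there exists a family of completely positive trace-non-increasing maps $\{\Phi_i\}$ with $\E=\sum_i\Phi_i$ and $\Tr[M_i\rho]=\Tr[\Phi_i(\rho)]$ for every state $\rho$. I would then define the candidate witnesses for the transformed pair by $\Phi_i':=\V_2\circ\Phi_i\circ\V_1$. These inherit complete positivity from $\Phi_i$ (unitary conjugation is CP) and trace-non-increasingness from $\Phi_i$ (both $\V_1$ and $\V_2$ are trace-preserving). Summation commutes with the fixed outer and inner unitary compositions, so $\sum_i\Phi_i'=\V_2\circ\E\circ\V_1$, giving the first required identity. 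For the second, I would write $\Tr[(V_1^\dagger M_i V_1)\rho]=\Tr[M_i V_1\rho V_1^\dagger]=\Tr[M_i\V_1(\rho)]=\Tr[\Phi_i(\V_1(\rho))]$, then use trace-preservation of $\V_2$ to insert it for free, yielding $\Tr[\V_2(\Phi_i(\V_1(\rho)))]=\Tr[\Phi_i'(\rho)]$, as required.

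The reverse direction does not require a separate argument: applying the forward direction to the pair $(\V_2\circ\E\circ\V_1,\,V_1^\dagger\M V_1)$ with the unitary channels $\V_2^\dagger$ and $\V_1^\dagger$ recovers $(\E,\M)$, since $\V_2^\dagger\circ\V_2\circ\E\circ\V_1\circ\V_1^\dagger=\E$ and $V_1 V_1^\dagger M_i V_1 V_1^\dagger=M_i$. Thus the equivalence follows by symmetry, and the lemma is established.

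The only point requiring a little care, which I would flag explicitly, is that trace-non-increasingness of $\Phi_i'$ genuinely uses trace-preservation of \emph{both} $\V_1$ and $\V_2$ (not just complete positivity), so any attempt to generalise the statement to non-unitary surrounding channels would fail at this step. This is also the reason the converse direction works cleanly: the inverses of unitary channels are themselves unitary channels, so the same construction applies verbatim in reverse. I anticipate no real obstacle; the lemma's content is essentially that the compatibility relation is preserved under the natural action of the unitary group on channels and measurements.
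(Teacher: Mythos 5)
Your proof is correct. Note that the paper itself offers no proof of this lemma; it is imported verbatim as Proposition~1 of Ref.~\cite{Teiko2018}, so there is nothing internal to compare against. Your construction $\Phi_i'=\V_2\circ\Phi_i\circ\V_1$ is the standard argument: it preserves complete positivity and trace-non-increasingness, sums to $\V_2\circ\E\circ\V_1$, and reproduces the outcome statistics of $V_1^\dagger\M V_1$ via $\Tr[V_1^\dagger M_iV_1\,\rho]=\Tr[M_i\,\V_1(\rho)]=\Tr[\Phi_i(\V_1(\rho))]=\Tr[\Phi_i'(\rho)]$, and the converse follows by applying the same construction with $\V_1^\dagger,\V_2^\dagger$. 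One small quibble with your closing remark: the forward construction actually survives for arbitrary (non-unitary) surrounding channels, since any channel is trace-preserving, with the POVM transformed by the Heisenberg-picture adjoint of $\V_1$; what genuinely requires unitarity is the converse, because only there do you need the surrounding channels to be invertible within the set of channels.
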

Now to generalise, recall that any unital qubit channel $\E_u$ can be decomposed into  $\E_u=\V_2\circ\E_{\vec{p}}\circ\V_1$ for two  unitary channels $\V_1$ and $\V_2$. We  emphasise this decomposition is not unique. The freedom, however, at the level of Pauli channels allows only for the permutations of $\vec{p}$ and not any other change. Thus, it is enough to find one such decomposition for a unital channel $\E_u$. Applying the above lemma, to determine whether $\E_u$ and $\MM$ are compatible, the compatibility of the corresponding $\E_{\vec{p}}$ and $\M_{\!s, \hat{\boldsymbol{n}}'}=V_1\MM V_1^\dagger$ should be checked. This induces  on $\hat{\boldsymbol{n}}$ a rotation $R_{\scalemath{0.6}{V_1}}$ corresponding to the unitary operator $V_1$ through $\mathrm{SU}(2)$-$\mathrm{SO}(3)$ homomorphism, i.e.,
\begin{equation}
\label{eq:n prime}
    \hat{\boldsymbol{n}}'=R_{\scalemath{0.6}{V_1}}\hat{\boldsymbol{n}}.
\end{equation}

Applying compatibility inequality \eqref{eqn:Incompatibilityinequality} for the measurement $\M_{\!s, \hat{\boldsymbol{n}}' }$, we find the set of Pauli channels compatible with this measurement denoted by $\P_{\!\scalemath{0.6}{\mathscr M}_{\!s, \hat{\boldsymbol{n}}'}}$, then by using the above lemma we get the set of all unital qubit channels compatible with a measurement $\MM$ as
\begin{equation}
\label{eq:compatible-unital}
    %\UU=\scalemath{1.3}{\cup}\ \V_2\circ\P_{\!\scalemath{0.6}{\mathscr M}_{\!s, \hat{\boldsymbol{n}}'}}\circ\V_1,
    \UU= \{\V_2\circ\P_{\!\scalemath{0.6}{\mathscr M}_{\!s, \hat{\boldsymbol{n}}'}}\circ\V_1: \ \forall \ \V_1, \V_2 \in \textrm{unitary\ channels}\} ,
\end{equation}
where $\hat{\boldsymbol{n}}'$ is given in Eq.~\eqref{eq:n prime}. By concatenation of a channel and a set, we mean the concatenation of the channel and all members of the set.

Alternatively, among the unbiased binary qubit measurements $\MM$, those compatible with a unital qubit channel $\E_u$ belong to
\begin{equation}
\label{eq:unital compatible set}
    \DD=V_1^\dagger \D_{\E_{\vec{p}}} V_1,
\end{equation}
where, by multiplication of a matrix and a set we mean multiplication of the matrix with each element of the set. Eq.~\eqref{eq:unital compatible set} gives a rotation, by $R_{V_1^\dagger}$, of the ellipsoid of the compatible measurements $\D_{\E_{\vec{p}}}$ in the Bloch ball, see Section~\ref{subsection:Incompatibility}.

%-----------------------------------------------------------------------------------------
\subsection{Minimum disturbance caused by a fixed POVM}
\label{subsection: minimum disturbance}
In this subsection, we study  the extent to which a fixed measurement $\MM$ can maintain the state space and its quantum resources, specifically coherence and discord-like correlations. To present our results, we make use of the following definition and technical lemma.
\begin{definition}[Compatibility polytope $\Delta_s$]
\label{def:polytope}
    For any $s\in[0,1]$, the compatibility polytope, denoted by $\Delta_s$, is the convex hull of Pauli channels corresponding to the probability vectors given by
    \begin{subequations}\label{Pauli-edge-s}
        \begin{align}
            \vec{q}_1&=(\frac{1+\sqrt{1-s^2}}{2},\frac{1- \sqrt{1-s^2}}{2},0,0),\label{subeq:q1}\\
            \vec{q}_2&=(\frac{1+\sqrt{1-s^2}}{2},0,\frac{1- \sqrt{1-s^2}}{2},0),\\
            \vec{q}_3&=(\frac{1+\sqrt{1-s^2}}{2},0,0,\frac{1- \sqrt{1-s^2}}{2}),
        \end{align}
    \end{subequations}
    and their other nine different permutations.
\end{definition}
 Geometrically, $\Delta_s$ is obtained by truncating all four vertices of the regular tetrahedron of Pauli channels, see Fig.~\ref{channel}. For $s=0$,  $\Delta_0$ coincides with the Pauli tetrahedron, however,  for $s=1$, $\Delta_1$ reduces to a regular octahedron. We use this definition in the following lemma the proof of which is brought in Appendix~\ref{app:lemma-polytope}.

\begin{lemma}\label{Lemma-Polytope}
The set $\PP$ of  compatible Pauli channels with a given measurement $\MM$ is  always included in the compatibility polytope $\Delta_s$. Moreover, the edges of $\Delta_s$ can be touched by the compatible set of channels if and only if the corresponding POVM is directed along a principal axis.
\end{lemma}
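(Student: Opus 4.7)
The plan is to establish the two claims separately. For the inclusion $\PP \subseteq \Delta_s$, I would first record the polyhedral description $\Delta_s = \{\vec{p} : p_j \geq 0, \sum_j p_j = 1, \ p_j \leq (1+u)/2 \text{ for all } j\}$, with $u := \sqrt{1-s^2}$, which one reads off from Definition~\ref{def:polytope} since each vertex $\vec{q}_i$ and its permutations have largest entry $(1+u)/2$. I would then argue by contrapositive: if $p_{j^*} > (1+u)/2$ for some index $j^*$, then $\vec{p}$ is incompatible with \emph{every} unbiased binary qubit POVM of sharpness $s$. A short check shows that the multiset $\{P_1, P_2, P_3\}$ is invariant under permutations of the components of $\vec{p}$, so $P_{\max}$ is permutation-invariant and it suffices to treat $p_0 > (1+u)/2$. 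Using $\sqrt{p_j p_k} \leq (p_j+p_k)/2$ and then optimising $\sqrt{p_0 p_i} + (1-p_0-p_i)/2$ over $p_i \in [0, 1-p_0]$ (its derivative is positive throughout since $p_0 > 1/2$), one obtains $P_i \leq 2\sqrt{p_0(1-p_0)}$ for every $i$. Since $x(1-x)$ is strictly decreasing on $[1/2, 1]$ with value $s^2/4$ at $x = (1+u)/2$, this yields $P_i < s$ for each $i$, hence $P_{\max} < s$, contradicting the necessary condition \eqref{s<Pmax}.

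For the edge claim, the plan is to classify the edges of the truncated tetrahedron $\Delta_s$ into two types: (i) segments of the six original Pauli-tetrahedron edges, along which two components of $\vec{p}$ vanish, and (ii) segments on the four triangular truncation facets, along which one $p_j$ vanishes and another equals $(1+u)/2$. The ``if'' direction is concrete: when $\hat{\boldsymbol{n}} = \hat{\boldsymbol{x}}$, compatibility in Eq.~\eqref{eqn:Incompatibilityinequality} reduces to $P_1 \geq s$, and on the type-(i) segment joining the truncation vertices on the $(1,0,0,0)$--$(0,1,0,0)$ tetrahedron edge one has $P_1 = 2\sqrt{p_0(1-p_0)} \geq s$ throughout, so the entire edge belongs to $\PP$. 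Conversely, on any type-(i) edge two of $P_1, P_2, P_3$ vanish identically, and the convention that terms with $P_i = 0$ in Eq.~\eqref{eqn:Incompatibilityinequality} must vanish forces two components of $\hat{\boldsymbol{n}}$ to be zero, pinning $\hat{\boldsymbol{n}}$ to a principal axis.

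The delicate case is a type-(ii) edge. Parametrising it (WLOG) by $p_0 = (1+u)/2$, $p_3 = 0$, and $(p_1, p_2) = \frac{1-u}{2}(t, 1-t)$ for $t \in [0, 1]$, and using $(1+u)(1-u) = s^2$, the compatibility inequality collapses to the single quadratic
\[ t^2 - (2n_1^2 + n_3^2)\, t + (n_1^2 + \alpha n_3^2) \leq 0, \]
with $\alpha := (1+u)/(1-u) \geq 1$. A direct discriminant computation yields $D = n_3^2(n_3^2 - 4\alpha) - 4 n_1^2 n_2^2$, and since $n_3^2 - 4\alpha \leq 1 - 4 = -3 < 0$, both summands are non-positive, so $D \leq 0$; equality requires $n_3 = 0$ and $n_1 n_2 = 0$, i.e., $\hat{\boldsymbol{n}} \in \{\pm\hat{\boldsymbol{x}}, \pm\hat{\boldsymbol{y}}\}$, in which case the double root of the quadratic sits at $t \in \{0, 1\}$ and recovers only a vertex endpoint of the edge (already handled by the type-(i) case). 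The principal technical hurdle is this discriminant analysis; once $\alpha \geq 1$ is extracted, the sign of $D$ is immediate.
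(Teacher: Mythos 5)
Your proof is correct, and it reaches the two claims by a route that differs from the paper's in instructive ways. For the inclusion $\PP\subseteq\Delta_s$, the paper parametrises the four truncation facets as convex combinations of the $\vec{q}_i$ and shows $P_{\max}\le s\sqrt{w_{\max}}+(1-\sqrt{1-s^2})(1-w_{\max})/2\le s$ there, with equality only at the vertices, so the compatible set cannot cross those facets; you instead pass to the half-space description $p_j\le(1+\sqrt{1-s^2})/2$ and show directly that \emph{every} point violating it has $P_{\max}<s$, contradicting \eqref{s<Pmax}. Your version is more self-contained (it needs no implicit separation/connectedness step to go from ``does not cross the facet'' to ``is contained in the polytope'') and reuses the bound $P_{\max}\le 2\sqrt{p_0^{\downarrow}(1-p_0^{\downarrow})}$ that the paper only deploys later, in the proof of Theorem~\ref{thm:fidelity-channel sharpness}; the permutation-invariance of $\{P_1,P_2,P_3\}$ that you invoke to reduce to $p_0>(1+\sqrt{1-s^2})/2$ is the right observation and is easily checked from \eqref{Pi}. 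For the edge claim, your ``if'' direction and the type-(i) ``only if'' argument ($P_2=P_3=0\Rightarrow n_2=n_3=0$) coincide with the paper's. Where you go beyond the paper's explicit write-up is the type-(ii) edges (the edges of the triangular truncation faces): the paper's ``only if'' paragraph treats only the edges with two vanishing coordinates and leaves the truncation-face edges to be covered implicitly by its facet computation from the first part, whereas your reduction of \eqref{eqn:Incompatibilityinequality} on such an edge to the quadratic $t^2-(2n_1^2+n_3^2)t+(n_1^2+\alpha n_3^2)\le 0$ with discriminant $D=n_3^2(n_3^2-4\alpha)-4n_1^2n_2^2\le 0$ handles them explicitly and pins down exactly when the boundary can be met (only at a vertex, only for principal-axis directions). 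The one step you state without proof is that the convex hull of the twelve permuted vertices equals the half-space description of the truncated simplex; this is standard for a cut at level $(1+\sqrt{1-s^2})/2\ge 1/2$ and is easily justified by enumerating the active constraint sets, but it deserves a sentence.
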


As previously mentioned, different methods exist to measure a given POVM, which results in different postmeasurement state spaces. Depending on the specific operational objective one has for the postmeasurement states, the best method for the measurement can be defined.  A natural choice, however, is to have the postmeasurement states as similar as possible to the premeasurement ones. This similarity can be quantified by employing the notion of average fidelity between pre- and postmeasurement states. Since the compatible channels $\E_\M$ model the postmeasurement state spaces for a given measurement $\M$, the best similarity, in the above sense, is then equivalent to
\begin{equation}
\label{eq:measurement-fidelity}
        %\overline{F}_{\!\C}(\M):=\max_{\C_{\M}}\overline{F}%(\E_{\M}),
        F_{\!\C}(\M):=\max_{\C_{\M}}\overline{F}_{\E_{\M}},
\end{equation}
where $\overline{F}_{\E_{\M}}$ is defined in Eq.~\eqref{eq:fidelity}. Hereafter, we refer to this quantity as \emph{the best fidelity over $\C_\M$}. Obviously, once $\C_\M$ is the whole set of compatible channels, the above gives the best fidelity of $\M$ overall.
For the unital channels considered in this paper, we can define \emph{the best fidelity over ${\UU}$}  by
\begin{align}
\label{eq:measurement-fidelity-unital}
F_{\!\U}(\MM):=\max_{\UU}\overline{F}_\E.
\end{align}

Now, we present our first main result on a tradeoff relation between the sharpness-disturbance in the next theorem. Informally, it shows that the sharper a given POVM is, the less similar pre- and postmeasurement states, in the best-case scenario, can be.

\begin{theorem}
\label{thm: measurement fidelity-sharpness}
   Assume a measurement $\MM$ and its all compatible unital qubit channels. Then,  the best fidelity over $\UU$,  Eq.~\eqref{eq:measurement-fidelity-unital}, satisfies the following tradeoff relation with sharpness
   \begin{equation}
   \label{eq:unital-fidelity-sharpness}
       (3F_{\!\U}(\MM)-2)^2+s^2=1.
   \end{equation}
\end{theorem}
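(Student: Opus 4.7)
The plan is to reduce the optimization over $\UU$ to one over Pauli channels, then combine Lemma~\ref{Lemma-Polytope} with a short convexity argument. By Eq.~\eqref{eq:compatible-unital}, every $\E_u\in\UU$ factors as $\V_2\circ\E_{\vec p}\circ\V_1$ with $\E_{\vec p}\in\P_{\M_{s,\hat{\boldsymbol n}'}}$ and $\hat{\boldsymbol n}'=R_{V_1}\hat{\boldsymbol n}$. A change of variables in the Haar integral \eqref{eq:fidelity} shows that $\overline{F}_{\E'\circ\V}=\overline{F}_{\V\circ\E'}$ for any unitary channel $\V$, so
\[
\overline{F}_{\V_2\circ\E_{\vec p}\circ\V_1}=\overline{F}_{(\V_1\circ\V_2)\circ\E_{\vec p}}.
\]
Since the compatibility constraint involves only $\V_1$ and $\vec p$, the channel $\V_2$ is free; hence $\V_3:=\V_1\circ\V_2$ ranges over all unitary channels. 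Using Eq.~\eqref{eq:corrected-fidelity-pauli}, which asserts that the corrected fidelity of a Pauli channel is attained by the Pauli unitary $\W_m$ and therefore by a unitary channel, the maximum over $\V_2$ equals $\overline{\mathscr F}_{\E_{\vec p}}=\tfrac{1}{3}(1+2p_m)$ with $p_m=\max_i p_i$. Consequently $F_{\!\U}(\MM)=\tfrac{1}{3}\bigl(1+2\,p_m^{\star}\bigr)$, where $p_m^{\star}$ denotes the supremum of $p_m$ over $\vec p\in\bigcup_{\hat{\boldsymbol n}'}\P_{\M_{s,\hat{\boldsymbol n}'}}$.

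To bound $p_m^{\star}$ I invoke Lemma~\ref{Lemma-Polytope}, which gives $\P_{\M_{s,\hat{\boldsymbol n}'}}\subseteq\Delta_s$ for every direction $\hat{\boldsymbol n}'$. The map $\vec p\mapsto\max_i p_i$ is convex, being the pointwise maximum of linear coordinate functionals, so it attains its maximum over the polytope $\Delta_s$ at a vertex. Inspection of Definition~\ref{def:polytope} shows that every listed vertex has largest component $(1+\sqrt{1-s^2})/2$, whence $p_m^{\star}\leq(1+\sqrt{1-s^2})/2$. For achievability, choose $V_1$ with $R_{V_1}\hat{\boldsymbol n}=\hat{\boldsymbol x}$ and take $\vec p=\vec q_1$; Eq.~\eqref{Pi} then yields $P_1=s$ and $P_2=P_3=0$, saturating Eq.~\eqref{eqn:Incompatibilityinequality} against $\M_{s,\hat{\boldsymbol x}}$. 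Setting $\V_2=\V_1^{\dagger}$ makes $\V_1\circ\V_2$ the identity channel, which is the optimal Pauli correction $\W_0$ since the largest component of $\vec q_1$ is $p_0$. The resulting $\E_u\in\UU$ therefore attains $\overline F_{\E_u}=\tfrac{1}{3}(2+\sqrt{1-s^2})$, and squaring $3F_{\!\U}(\MM)-2=\sqrt{1-s^2}$ delivers Eq.~\eqref{eq:unital-fidelity-sharpness}.

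The delicate point is the reduction in the first paragraph: the compatibility condition couples $\V_1$ with $\vec p$ through the rotated direction $\hat{\boldsymbol n}'$, so one must argue carefully that the independent freedom in $\V_2$ is enough to realise the Pauli corrected fidelity $\tfrac{1}{3}(1+2p_m)$ for every admissible pair $(\V_1,\vec p)$. Once this reduction is in place, the polytope bound is a one-line convexity argument and achievability reduces to a direct verification on $\vec q_1$.
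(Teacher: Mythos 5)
Your proposal is correct and follows essentially the same route as the paper's proof: decompose the unital channel as $\V_2\circ\E_{\vec p}\circ\V_1$, reduce the maximisation over unitaries to the corrected fidelity $\overline{\mathscr F}_{\E_{\vec p}}=\tfrac13(1+2p_m)$ via Eqs.~\eqref{eq:corrected-fidelity} and \eqref{eq:corrected-fidelity-pauli}, and then bound and attain $p_m=(1+\sqrt{1-s^2})/2$ using Lemma~\ref{Lemma-Polytope}. Your added care — the explicit decoupling of $\V_2$ from the compatibility constraint via $\overline F_{\E'\circ\V}=\overline F_{\V\circ\E'}$, the convexity argument over the vertices of $\Delta_s$, and the direct verification that $\vec q_1$ is compatible with $\M_{s,\hat{\boldsymbol x}}$ — only fills in details the paper leaves implicit.
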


Before proceeding with the proof, let us examine the theorem for two extreme cases of a projective and the maximally noisy POVMs. The first implies that the best measurement strategy, as far as being restricted to the set of unital channels, definitely disturbs the states because $F_{\!\U}(\M_{1,\hat{\boldsymbol{n}}})=2/3$ and cannot achieve its maximum value of unity equal to one. This amount of fidelity corresponds to the second method mentioned in the caption of Fig.~\ref{fig:indirect-measurement}. On the other hand, $F_{\!\U}(\M_{0,\hat{\boldsymbol{n}}})=1$, which is a trivial bound showing it can be measured without any disturbance.
\begin{proof}
    To prove  the theorem,  by applying Eqs.~\eqref{eq:n prime} and \eqref{eq:compatible-unital}, we get
    \begin{align}
        F_{\!\U}(\MM)&=\max_{\UU}\overline{F}_\E=\max_{\V_2,\V_1}\max_{\P_{\!\scalemath{0.6}{\mathscr M}_{\!s, \hat{\boldsymbol{n}}'}}}\overline{F}_{(\V_2\circ\E_{\vec{p}}\circ\V_1)}\nonumber\\
        &=\max_{\hat{\boldsymbol{m}}}\max_{\P_{\!\scalemath{0.6}{\mathscr M}_{\!s, \hat{\boldsymbol{m}}}}}\overline{\mathscr{F}}_{\E_{\vec{p}}}=\max_{\hat{\boldsymbol{m}}}\max_{\P_{\!\scalemath{0.6}{\mathscr M}_{\!s, \hat{\boldsymbol{m}}}}}\frac13(1+2p_{m})\nonumber\\
        &=\frac{1}{3}\left(2+\sqrt{1-s^2}\right).
    \end{align}
    Above, the first equality in the second line is due to Eq.~\eqref{eq:corrected-fidelity} and \eqref{eq:corrected-fidelity-pauli}, which implies that for a Pauli channel the corrected fidelity is achievable by a unitary map. The subsequent equality is also a result of Lemma~\ref{Lemma-Polytope} implying one can always find a direction to get, and never exceed,  $p_m=(1+\sqrt{1-s^2})/2$ for a compatible Pauli channel.
\end{proof}

While the above theorem shows that the disturbance due to measuring an informative POVM is inevitable, for practical reasons, one may wonder how quantum resources are affected by such a POVM. Strictly speaking, an alternative approach to defining the best measurement strategy can be the one that preserves more quantum resources in the state space. Here, we study two important characteristics  of such quantum resources, namely quantum coherence and quantum correlation once the measurement is applied locally. We consider for the former the average coherence of the postmeasurement state space and for the latter the LQU of a maximally entangled state measured locally by $\M$. The best measuring strategy then is the one that maximises one of these two, depending on the practical intentions. Following the discussion before Theorem~\ref{thm: measurement fidelity-sharpness}, for a measurement $\M$, these amounts are respectively given by
\begin{align}
    \Q_\C(\M)&:=\max_{\C_\M}\Q_{\E_\M},\label{eq:measurement-quantumness}\\
    \L_\C(\M)&:=\max_{\C_\M}\L_{\varrho_{\E_\M}|A},\label{eq:measurement-LQU}
\end{align}
where $\Q_{\E_\M}$ is the quantumness \eqref{eq:quantumness} of the channel $\E_\M$ compatible with $\M$ which quantifies the average coherence preserved in the postmeasurement state space. Also, $\varrho_{\E_\M}$ is the Choi-Jamio{\l}kowski state  of the compatible channel $\E_\M$ whose LQU,  represented by $\L_{\varrho_{\E}|A}$, quantifies the discord-like correlations  preserved in the maximally entangled state after being measured locally by $\M$. Henceforward, we call the first \emph{the best  quantumness over $\C_\M$} and the latter  \emph{the best LQU over $\C_\M$}.
Restricting to the set of unital channels,  we  define \emph{the best quantumness and the best LQU  over ${\UU}$}, respectively,   by
\begin{align}
    \Q_{\U}(\MM)&:=\max_{\UU}\Q_\E, \label{eq:measurement-quantumness-unital}\\
    \L_{\U}(\MM)&:=\max_{\UU}\L_{\varrho_{\E}}.
    \label{eq:measurement-LQU-unital}
\end{align}
Similar to Theorem~\ref{thm: measurement fidelity-sharpness}, the next theorem presents a tradeoff relation between the sharpness of a given $\MM$ and its best quantumness and LQU over $\UU$.

\begin{theorem}
\label{thm:quantum-resources-sharpness}
    Consider an unbiased binary qubit measurement $\MM$ and  restricted its  set of compatible channels to the unital qubit ones, the best  quantumness and the best LQU over $\UU$, defined by Eqs.~\eqref{eq:measurement-quantumness-unital} and \eqref{eq:measurement-LQU-unital},  satisfy
    \begin{align}
    \Q_{\U}(\MM)+s^2&= 1,\label{eq: measurement quantumnes-sharpness}\\
    \L_{\U}(\MM)+s & =1.\label{eq: measurement lqu-sharpness}
    \end{align}
\end{theorem}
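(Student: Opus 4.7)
The plan is to reduce both maximizations to optimizations over Pauli channels and then solve each. By Eq.~\eqref{eq:compatible-unital}, every $\E \in \UU$ decomposes as $\V_2 \circ \E_{\vec p} \circ \V_1$ with $\E_{\vec p}$ compatible with $\M_{\!s, \hat{\boldsymbol{m}}}$ for $\hat{\boldsymbol{m}} = R_{\scalemath{0.6}{V_1}} \hat{\boldsymbol{n}}$. Because both the quantumness and the LQU of the Choi-Jamio{\l}kowski state are invariant under pre- and post-composition with unitaries [Eqs.~\eqref{eq:quantumness-invariance} and \eqref{eq:lqu-invariance}], the outer maximum over $\V_1, \V_2$ is trivial, and since $\Q_{\E_{\vec p}}$ and $\L_{\varrho_{\E_{\vec p}}}$ depend only on $\vec p$, the further maximum over $\hat{\boldsymbol{m}}$ relaxes the compatibility inequality \eqref{eqn:Incompatibilityinequality} to the single scalar constraint $P_{\max}(\vec p) := \max\{P_1, P_2, P_3\} \geq s$, obtained by aligning $\hat{\boldsymbol{m}}$ with the principal axis corresponding to the largest $P_i$.

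For the LQU identity, the reduction makes the result almost immediate. Eq.~\eqref{eq:lequ} gives $\L_{\varrho_{\E_{\vec p}}} = 1 - P_{\max}$, so subject to $P_{\max} \geq s$ its maximum value is $1 - s$, attained on the boundary $P_{\max} = s$ by the four explicit Pauli channels of Remark~\ref{Remark1}. This will establish Eq.~\eqref{eq: measurement lqu-sharpness}.

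The quantumness identity takes more work because $\Q_{\E_{\vec p}} = (p_0^\downarrow - p_1^\downarrow)^2 + (p_2^\downarrow - p_3^\downarrow)^2$ involves the ordering of $\vec p$. Since both the objective and the constraint $P_{\max} \geq s$ are permutation-invariant, the plan is to assume without loss of generality $p_0 \geq p_1 \geq p_2 \geq p_3$, under which a short check shows $P_{\max} = P_1 = 2(\sqrt{p_0 p_1} + \sqrt{p_2 p_3})$ and $\Q_{\E_{\vec p}} = (p_0 - p_1)^2 + (p_2 - p_3)^2$. Introducing $u = p_0 + p_1$, $a = \sqrt{p_0 p_1}$, and $b = \sqrt{p_2 p_3}$ recasts the problem as maximizing $1 - 2 u(1-u) - 4(a^2 + b^2)$ subject to $a+b \geq s/2$, $0 \leq a \leq u/2$, $0 \leq b \leq (1-u)/2$, and $u \in [1/2, 1]$. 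A case analysis on which of these box constraints are active should show that the optimum sits at $u = 1$, $a = s/2$, $b = 0$, which recovers the vertex $\vec{q}_1$ of the compatibility polytope from Definition~\ref{def:polytope} and yields $\Q_{\U}(\MM) = 1 - s^2$. The main obstacle will be carrying out this case analysis cleanly while keeping track of all the box constraints, but since everything is polynomial of low degree in two variables the computation is routine and the extremizer aligns neatly with the geometric picture suggested by Lemma~\ref{Lemma-Polytope}.
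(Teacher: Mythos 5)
Your proposal is correct, and its overall skeleton coincides with the paper's: reduce to Pauli channels via the unitary invariances \eqref{eq:quantumness-invariance} and \eqref{eq:lqu-invariance}, observe that optimising over the measurement direction turns the compatibility condition into the single constraint $P_{\max}\geq s$, and then note that $\L_{\varrho_{\E_{\vec p}}}=1-P_{\max}$ makes Eq.~\eqref{eq: measurement lqu-sharpness} immediate, with tightness from Remark~\ref{Remark1}. Where you genuinely diverge is the quantumness bound. The paper avoids your constrained optimisation entirely by means of the algebraic identity
\begin{equation*}
1-P_{\max}^2=\Q_{\E_{\vec p}}+\tfrac{1}{2}\left(T_2^2+T_3^2\right),\qquad T_i=2\left(\sqrt{p_0^{\downarrow}p_i^{\downarrow}}-\sqrt{p_j^{\downarrow}p_k^{\downarrow}}\right),
\end{equation*}
which gives $\Q_{\E_{\vec p}}\leq 1-P_{\max}^2\leq 1-s^2$ in one line, with equality forced exactly when $T_2=T_3=0$; tightness is then read off at the polytope vertices $\vec{q}_i$ of Definition~\ref{def:polytope}, which are compatible when $\hat{\boldsymbol{n}}'$ lies along a principal axis. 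Your route instead parametrises by $u=p_0^{\downarrow}+p_1^{\downarrow}$, $a=\sqrt{p_0^{\downarrow}p_1^{\downarrow}}$, $b=\sqrt{p_2^{\downarrow}p_3^{\downarrow}}$ and maximises $1-2u(1-u)-4(a^2+b^2)$ over the box with $a+b\geq s/2$; I checked that this case analysis does close (the regime $b\leq(1-u)/2$ active gives objective $1-s^2-2(1-u)(1-s)$, maximised at $u=1$, while the interior regime caps at $1-s$), that the maximiser $u=1$, $a=s/2$, $b=0$ is $\vec{q}_1$, and that the ordering constraint $p_1^{\downarrow}\geq p_2^{\downarrow}$ you silently dropped is satisfied there, so the relaxation is harmless. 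The net comparison: your argument is self-contained and makes the extremiser's location transparent, at the cost of a multi-case computation; the paper's identity is shorter and, as a bonus, immediately yields the inequality $\Q_{\E_{u}}\leq 1-P_{\max}^2$ that is reused verbatim in the proof of Theorem~\ref{thm:quantumness-channel sharpness}. You may want to state explicitly (as the paper does) that equality in \eqref{eq: measurement quantumnes-sharpness} generally fails if one restricts to Pauli channels without the rotation freedom, since the tight vertices are compatible only for principal-axis directions.
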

We bring the proof of this theorem in Appendix~\ref{app:quantum-resources-sharpness}. In  this appendix, we also show that the equality in   (\ref{eq: measurement quantumnes-sharpness})  reduces to the inequality $\le 1$   if  the best quantumness is defined over the Pauli channels instead of the unital ones.
An exception however exists, namely, for three  measurements for which $\hat{\boldsymbol{n}}$ is directed along one of the  principal axes $\hat{\boldsymbol{x}}$, $\hat{\boldsymbol{y}}$, and $\hat{\boldsymbol{z}}$. In these particular  cases,  the equality \eqref{eq: measurement quantumnes-sharpness} is still satisfied even for the best quantumness  over the Pauli channels.

This theorem significantly proves for a projective measurement, i.e., $s=1$,  the postmeasurement state space, independent of measuring strategy, becomes entirely classical. That means all postmeasurement states become compatible.  On the other hand,  any informative measurement does destroy quantum resources to a degree determined by the sharpness of the measurement.%-----------------------------------------------------------------------------------------
\subsection{The sharpest measurement for a fixed disturbance}
\label{subsection: sharpest measurement}
In this subsection, we delve into the inverse problem. Here, we begin  with a predetermined level of disturbance and examine among the measurements with this specific postmeasurement state space, what is their maximum sharpness. Having the set of measurements compatible with a given quantum channel, the above is equivalent to starting with a quantum channel and then finding the maximum sharpness a measurement in its compatible set possesses. We will bring some tradeoff relations in what follows proving the more disturbance exists in the postmeasurement space, the sharper the measurement might have been.

Let $\S_\M$ define a measure for the sharpness of a given quantum measurement $\M$. Following the terminology method employed in the previous subsection, we introduce the notion of \emph{the best sharpness over $\D_\E$} as the maximum sharpness of a measurement $\M\in\D_{\E}$ compatible with $\E$, i.e,
\begin{equation}
\label{eq:channel sharpness}
    \S_{\D}(\E):=\max_{\D_{\E}} \S_\M.
\end{equation}

Various measures of sharpness can be employed, however, in the specific case of an unbiased binary qubit measurement that we are considering, a natural choice is $s$, as we discussed after Eq.\eqref{eqn: POVM elements}. Recalling Eq.\eqref{s<Pmax} and the discussion after Remark~\ref{Remark1}, for a Pauli channel the best sharpness of $\E_{\vec{p}}$ over $\D_{\E_{\vec{p}}}$ is precisely equal to $P_{\max}$,
\begin{equation}
\label{eq:pauli sharpness s}
    \S_{\D}(\E_{\vec{p}}):=\max_{\D_{\E_{\vec{p}}}}\ s=P_{\max}.
\end{equation}
\\
As mentioned at the beginning of this section, the set $\DD$ is obtained from $\D_{\E_{\vec{p}}}$ by some rotation \eqref{eq:unital compatible set}. This does not change the maximum sharpness in the set of compatible measurements. Thus, we have
\begin{equation}
\label{eq: unital sharpness}
     \S_{\D}(\E_u):=\max_{\D_{\E_u}}\ s=P_{\max}.
\end{equation}

In what follows, we discuss  how $ \S_{\D}(\E_{u})$ is related to the input-output fidelity that measures the disturbance of the postmeasurement space.

The next theorem presents a tradeoff relation between the sharpness of a unital qubit channel with all positive signed-singular-values \cite{ZyczkowskiBook2017} and its input-output fidelity. It proves for a given postmeasurement state space,  the greater the fidelity between the premeasurement and postmeasurement state spaces, the lower the sharpness of the measurement that maps them, in the optimal scenario.

\begin{theorem}
\label{thm:fidelity-channel sharpness}
    Assume a unital qubit channel $\E_u=\V_2\circ\E_{\vec{p}}\circ\V_1$ corresponding to a Pauli channel $\E_{\vec{p}}$. Also, let the largest component of $\vec{p}$ exceed $1/2$, i.e., $p_{m}\geq1/2$. Assigning the image of $\E_{u}$ to the outputs of a measurement $\MM$, then the following tradeoff relation applies to the corrected fidelity  $ \overline{\mathscr{F}}_{\E_{u}}$ given by \eqref{eq:corrected-fidelity},  and the best sharpness over $\D_{\E_{u}}$ defined  by Eq.~\eqref{eq: unital sharpness}
    \begin{equation}
    \label{eq:fidelity-channel sharpness}
        (3\overline{\mathscr{F}}_{\E_{u}}-2)^2+(\S_{\D}(\E_{u}))^2\leq 1.
    \end{equation}
 \end{theorem}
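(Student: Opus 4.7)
The plan is to reduce the inequality to a statement purely about the Pauli probability vector $\vec{p}$. Both sides of \eqref{eq:fidelity-channel sharpness} are invariant under pre- and post-composition of $\E_{\vec p}$ with unitary channels: the corrected fidelity is defined as an optimisation over arbitrary $\E'$ and $\E''$ in \eqref{eq:corrected-fidelity}, so the outer unitaries in $\E_u = \V_2 \circ \E_{\vec p} \circ \V_1$ can be absorbed into them, yielding $\overline{\mathscr{F}}_{\E_u} = \overline{\mathscr{F}}_{\E_{\vec p}} = \tfrac{1}{3}(1 + 2 p_m)$ by \eqref{eq:corrected-fidelity-pauli}. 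For the sharpness side, Eq.~\eqref{eq: unital sharpness} already gives $\S_{\D}(\E_u) = P_{\max}$. Hence $3\overline{\mathscr{F}}_{\E_u} - 2 = 2 p_m - 1$, which is nonnegative by the hypothesis $p_m \geq 1/2$, so the squaring in \eqref{eq:fidelity-channel sharpness} preserves the direction of the inequality.

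After these substitutions, the theorem reduces to the claim $P_{\max}^2 \leq 1 - (2 p_m - 1)^2 = 4 p_m (1 - p_m)$. Since $1 - p_m = p_i + p_j + p_k$, it suffices to verify, for each permutation $(i,j,k)$ of $(1,2,3)$ appearing in \eqref{Pi}, the pointwise bound
\begin{equation*}
\bigl(\sqrt{p_m p_i} + \sqrt{p_j p_k}\bigr)^2 \leq p_m \, (p_i + p_j + p_k).
\end{equation*}

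I would prove this in two steps. First, Cauchy-Schwarz in the elementary form $(\sqrt{ab} + \sqrt{cd})^2 \leq (a+c)(b+d)$---which follows immediately from AM-GM applied to the cross term $2\sqrt{abcd}$---gives $(\sqrt{p_m p_i} + \sqrt{p_j p_k})^2 \leq (p_m + p_j)(p_i + p_k)$. Second, I would show that $(p_m + p_j)(p_i + p_k) \leq p_m (p_i + p_j + p_k)$; expanding, this is equivalent to $p_j (p_m - p_i - p_k) \geq 0$, which holds because $p_i + p_k \leq 1 - p_m \leq 1/2 \leq p_m$. Chaining the two inequalities proves the pointwise bound and hence the theorem.

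The only real obstacle is choosing the right intermediate quantity: Cauchy-Schwarz alone is too loose to directly beat the $4 p_m (1-p_m)$ bound, and it is the second step that consumes the hypothesis $p_m \geq 1/2$. Without this hypothesis the inequality can genuinely fail---the uniform Pauli channel $p_0 = p_1 = p_2 = p_3 = 1/4$ saturates $P_{\max} = 1$ while $4 p_m (1-p_m) = 3/4$---so the assumption is essential, not a mere convenience, and this deserves a short remark in the write-up.
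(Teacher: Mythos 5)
Your proof is correct and takes essentially the same route as the paper's: both reduce the theorem to the scalar inequality $P_{\max}^2\le 4p_m(1-p_m)$ and prove it by a two-step elementary bound whose second step consumes the hypothesis $p_m\ge 1/2$ (the paper uses AM--GM on $\sqrt{p_2^{\downarrow}p_3^{\downarrow}}$ followed by optimising over $p_1^{\downarrow}$; you use Cauchy--Schwarz followed by a sign check). The only minor difference is that you bound every $P_i$ uniformly instead of invoking the sorted-order formula for $P_{\max}$, and your observation that the bound fails for $p_m<1/2$ matches the paper's own remark after the theorem.
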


\begin{proof}
    For $\vec{p}^{\downarrow}$ being $\vec{p}$ rearranged in a non-increasing order, $P_{\max}$ is given by Eq.~\eqref{app-eq:P-max}.
    Hence, for any map $\E_{\vec{p}}$ appearing in the decomposition of $\E_u$, independent of the decomposition, $P_{\max}$ is fixed. The upper-bound for the channels with $p_0^{\downarrow}=p_m\geq1/2$ is obtained by
    \begin{align}
    S_{\D}(\E_{\vec{p}})&=P_{\max}=2\left(\sqrt{p_0^{\downarrow}p_1^{\downarrow}}+\sqrt{p_2^{\downarrow}p_3^{\downarrow}}\right)\nonumber\\
        &\leq 2\left(\sqrt{p_0^{\downarrow}p_1^{\downarrow}}+\frac{1-p_0^{\downarrow}-p_1^{\downarrow}}{2}\right) \nonumber\\
        &\leq2\sqrt{p_0^{\downarrow}(1-p_0^{\downarrow})}.
    \end{align}
    Here, by optimising the term in the second line with respect to $p_1^{\downarrow}$ and considering the assumption of  $p_0^{\downarrow}\geq1/2$, the last inequality is obtained for $p_1^{\downarrow}=1-p_0^{\downarrow}$. Replacing $p_0^{\downarrow}=p_m$ from Eq.~\eqref{eq:corrected-fidelity-pauli} will complete the proof.
\end{proof}

The term $(3\overline{\mathscr{F}}_{\E_{u}}-2)^2$ in Eq.~\eqref{eq:fidelity-channel sharpness} for $1/2\leq p_m\leq1$, i.e., for $2/3\leq\overline{\mathscr{F}}_{\E_{u}} \leq1$, monotonically increases from $0$ to $1$. This theorem shows at a cost of certain disturbance in the postmeasurement space, how sharp the measurement might have been.

The technical reason why the above theorem is restricted to the channels with $p_m\geq1/2$ is obvious; while the best sharpness over $\D_{\E{u}}$, determined by $P_{\max}$, depends on $\vec{p}$ totally, the corrected fidelity relies only on its largest component. For $p_m\leq1/2$, it is always possible to find a channel with $P_{\max}=1$. An example of which is the channel that corresponds to $p_m=p_0=p_1=\frac{1}{2}-x$ and $p_2=p_3=x$ where $x\leq1/4$ is the deviation of $p_0$ from $1/2$. Thus, it is unexpected to formulate such a tradeoff relation for $p_m<1/2$.

To address this issue and take other notions of postmeasurement disturbance in terms of states' resourcefulness into account, we bring in the next theorem two tradeoff relations between the quantumness and LQU of a given $\E_{u}$ and its best sharpness over $\D_{\E_{u}}$.

\begin{theorem}
\label{thm:quantumness-channel sharpness}
    Let $\E_{u}$ be a unital qubit channel and restricted its set of compatible measurements to the unbiased binary qubit ones. Then, the best sharpness over $\D_{\E_{u}}$, Eq.~\eqref{eq: unital sharpness}, respects
    \begin{align}
       \Q_{\E_{u}}+(\S_\D(\E_{u}))^2&\leq 1,\label{eq:quantumness-channel sharppness}\\
       \L_{\varrho_{\E_{u}}}+\S_\D(\E_{u})&=1,\label{eq:lqu-channel sharppness}
    \end{align}
    where $\Q_{\E_{u}}$ and $\L_{\varrho_{\E_{u}}}$ are  the quantumness and LQU of $\E_{u}$, respectively, defined in Section~\ref{subsection:channels}.
\end{theorem}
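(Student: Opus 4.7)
The plan is to leverage the decomposition $\E_u=\V_2\circ\E_{\vec{p}}\circ\V_1$ together with the unitary invariance of $\Q$ (Eq.~\eqref{eq:quantumness-invariance}) and $\L$ (Eq.~\eqref{eq:lqu-invariance}), both of which reduce the quantities of interest to their values on the Pauli channel $\E_{\vec{p}}$. Since $\S_\D(\E_u)=P_{\max}$ by Eq.~\eqref{eq: unital sharpness} and $\L_{\varrho_{\E_u}}=\L_{\varrho_{\E_{\vec{p}}}}=1-P_{\max}$ by Eq.~\eqref{eq:lequ}, the LQU identity~\eqref{eq:lqu-channel sharppness} is immediate: the two sides sum to $1$.

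The quantumness inequality~\eqref{eq:quantumness-channel sharppness} requires more work. Using Eq.~\eqref{Q-PauliChannels} together with the invariance $\Q_{\E_u}=\Q_{\E_{\vec{p}}}$, one has $\Q_{\E_u}=(p_0^\downarrow-p_1^\downarrow)^2+(p_2^\downarrow-p_3^\downarrow)^2$. The first substantive step is to identify $P_{\max}$ in terms of the sorted multiset $\vec{p}^\downarrow=(a,b,c,d)$ with $a\geq b\geq c\geq d\geq 0$ and $a+b+c+d=1$. The three numbers $P_i=2(\sqrt{p_0 p_i}+\sqrt{p_j p_k})$ enumerate exactly the three ways of pairing $\{p_0,p_1,p_2,p_3\}$ into two unordered pairs. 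Factoring differences such as $P_1-P_2=2(\sqrt{p_1}-\sqrt{p_2})(\sqrt{p_0}-\sqrt{p_3})$ shows that, irrespective of which weight is labeled $p_0$, the maximal pairing groups the two largest with each other and the two smallest with each other. Hence $P_{\max}=2(\sqrt{ab}+\sqrt{cd})$.

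The remainder is short algebra. Setting $u=a+b$ and $v=c+d$, so that $u+v=1$, I would expand
\begin{align*}
\Q_{\E_u}+P_{\max}^2 &= (u^2-4ab)+(v^2-4cd)+4(ab+cd+2\sqrt{abcd})\\
&= u^2+v^2+8\sqrt{abcd}=1-2uv+8\sqrt{abcd}.
\end{align*}
Two applications of AM--GM give $\sqrt{ab}\leq u/2$ and $\sqrt{cd}\leq v/2$, hence $8\sqrt{abcd}\leq 2uv$, which delivers $\Q_{\E_u}+\S_\D(\E_u)^2\leq 1$ with equality iff $a=b$ and $c=d$. The main obstacle is the pairing lemma for $P_{\max}$; once it is in hand, the tradeoff collapses to AM--GM.
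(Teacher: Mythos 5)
Your proof is correct and takes essentially the same route as the paper's: reduce to a Pauli channel by unitary invariance, use $\S_\D(\E_u)=P_{\max}=2\left(\sqrt{p_0^{\downarrow}p_1^{\downarrow}}+\sqrt{p_2^{\downarrow}p_3^{\downarrow}}\right)$ (the pairing fact the paper asserts in Eq.~\eqref{app-eq:P-max}, which you actually prove via the factorisation of $P_i-P_j$), and show $1-P_{\max}^2-\Q_{\E_{\vec p}}\geq 0$ --- the paper exhibits this residual as the manifestly nonnegative sum of squares $\tfrac12(T_2^2+T_3^2)$ in Eq.~\eqref{aap-eq:s2-bound}, which equals exactly your $2uv-8\sqrt{abcd}$ obtained via AM--GM. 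The only slip is your parenthetical equality condition ``iff $a=b$ and $c=d$'', which fails in degenerate cases (e.g.\ $c=d=0$ with $a\neq b$ gives equality), but the theorem asserts only the inequality, so this is immaterial.
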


\begin{proof}
    We begin the proof by noticing the decomposition of $\E_u=\V_2\circ\E_{\vec{p}}\circ\V_1$ and the fact that both $\Q_\E$ and $\L_{\varrho_\E}$ are invariant under such unitary evolutions, see \eqref{eq:quantumness-invariance} and \eqref{eq:lqu-invariance}. Therefore, the proof of Eq.~\eqref{eq:quantumness-channel sharppness} is a straight consequence of the fact that $\Q_{\E_u}=\Q_{\E_{\vec{p}}}$, Eq.~\eqref{eq: unital sharpness}, and $1-P_{\max}^2\geq \Q_{\E_{\vec{p}}}$ as proved in Eq.~\eqref{app-eq:P-max}, Appendix~\ref{app:quantum-resources-sharpness}.
\\
    On the other hand, the proof of Eq.~\eqref{eq:lqu-channel sharppness} is an immediate result of the equality $\L_{\varrho_{\E_u}}=\L_{\varrho_{\vec{p}}}$, Eq.~\eqref{eq:lequ}, and Eq.~\eqref{eq: unital sharpness}, which completes the poof of the theorem.
\end{proof}

This theorem proves if a POVM destroys a certain amount of quantum resources in the state space, it cannot possess an arbitrary sharpness.
From both Theorems~\ref{thm:quantum-resources-sharpness} and \ref{thm:quantumness-channel sharpness}, it is now obvious that a POVM of the form $\MM$ is sharp  if and only if its compatible set of  unital channels  includes  only q-c   channels, i.e., the post-measurement states are classical with respect to some basis.  This implies that for any unsharp measurements, one can find some measurement strategies to maintain some quantum resources in the state space.

%-----------------------------------------------------------------------------------------
% Section 4 - CONCLUSION
%-----------------------------------------------------------------------------------------
\section{Conclusion}
\label{sec:Conclusion}
In this paper, we have considered the question that how gaining information by a measurement can affect the states of the system and their quantum resourcefulness. To address this question, we have used channel-measurement compatibility as  a framework that characterises all possible disturbances  that might have been caused by a given measurement. The framework allows also for identifying all possible measurements inducing a given disturbance.

Restricted to unbiased binary qubit measurements and unital qubit channels, we proved in Theorem~\ref{thm: measurement fidelity-sharpness}, that any informative POVM inevitably changes the states. Whereas in Theorem~\ref{thm:quantum-resources-sharpness}, we showed this change is necessarily destructive in terms of quantum resources measured by coherence and discord-like correlations. That is to say, the amount of coherence and discord-like correlations, in case the measurement is local, in the postmeasurement states are less than the premeasurement states to the extent determined by the measurement sharpness.  In the extreme case of a projective measurement, which represents the most informative measurements, all these quantum resources are entirely erased.

The study of the  inverse problem, i.e., the question that how informative the measurement might have been for a certain disturbance, results in Theorem~\ref{thm:fidelity-channel sharpness} which proves that the best informativity is bounded by the amount of the similarity of the pre- and postmeasurement states. Theorem~\ref{thm:quantumness-channel sharpness}, however, presents the tradeoff relation between the best informativity and quantum resources preserved in the postmeasurement state space.

An obvious approach for further work is to extend the results to a larger class of measurements and postmeasurement models rather than unbiased binary $\MM$ and unital qubit channels $\U$. For such a generalisation, an immediate candidate can be  the set of binary qubit measurements that are not necessarily unbiased, i.e., $\M=(M_1,M_2)$ with $\Tr M_1\neq\Tr M_2$. In this case, we expect to have tradeoff relations based on the disturbance and the measurement sharpness and biasedness. This is because the extreme case of a projective measurement $\M=(\Id,0)$ which is trivially sharp~\cite{BuschFP2009} is not informative and thus destructive. So the POVM sharpness is not the only effective factor.

%-----------------------------------------------------------------------------------------
% ACKNOWLEDGEMENT
%-----------------------------------------------------------------------------------------
\section*{acknowledgment}
This work was supported by Ferdowsi University of Mashhad under Grant No. 3/55453 (1400/10/19).
FS acknowledges support from the Foundation for Polish Science through TEAM-NET project (contract no. POIR.04.04.00-00-17C1/18-00).

\appendix
%-----------------------------------------------------------------------------------------
% Appendix A - SHARPNESS MEASURES
%-----------------------------------------------------------------------------------------

\section{Sharpness Measures}
\label{sharpness-measures}
Considering the  unsharpness measures defined previously, for instance, in \cite{LiuPRA2021} and \cite{MitraIJTF2022}, we will show that  for the unbiased binary qubit measurements both quantifiers reduce to $(1-s^2)$, up to a normalization  factor. Accordingly, the related sharpness measures are  given simply by $s^2$.  As $s^2$ is monotonically increasing function of $s$, we can take   $s$ also as a measure of sharpness.
\\
For a general measurement $\M=\{M_i \}_{i=0}^{N-1}$, the  unsharpness quantifier  based on the  uncertainty  is defined    by   \cite{LiuPRA2021}
\begin{align*}
f(\M)=\|F(\M)\|_1,
\end{align*}
where $\|X\|_1=\sum_{i,j} |X_{ij}|$ denotes the $l^1$-norm of the matrix $X=[X_{ij}]$, and   $F(\M)=[r_{ij}]$  with $r_{ij}=\frac{1}{d}[\delta_{ij} \Tr(M_i) - \Tr(M_i M_j)]$.
For an unbiased binary qubit measurement
$\MM=\{M_+,M_-\}$, we get
\begin{align*}
	f(\MM)=\|F(\MM)\|_1 = \sum_{i,j=1}^2 |r_{ij}|=1-s^2,
\end{align*}
as such, $1-f(\MM)=s^2$ gives the sharpness of the unbiased binary qubit measurements.
\\
Another unsharpness measure based on Luder's instrument is defined by \cite{MitraIJTF2022}
\begin{align*}
	\E^L(\M)=\| \Id - \sum_i M_i^2\|,
\end{align*}
where $\|.\|$ denotes the operator norm. For an unbiased binary qubit measurement $\MM$ we obtain
\begin{align*}
	\E^L(\MM)=\frac{1}{2} (1-s^2),
\end{align*}
which is, up to a normalisation factor, the same as the above measure so leads to the same sharpness $s^2$ for the unbiased binary qubit measurements.

%-----------------------------------------------------------------------------------------
% Appendix B - PROOF OF LEMMA
%-----------------------------------------------------------------------------------------

\section{Proof of Lemma~\ref{Lemma-Polytope}}
\label{app:lemma-polytope}
To prove the first part of the lemma, we will show $\PP$ never crosses the boundaries of $\Delta_s$. To see that, consider  a channel defined by a convex combination of three vertices mentioned in Eq.~\eqref{Pauli-edge-s}, say
\begin{equation}
\vec{p}=w_1\vec{q}_1+w_2\vec{q}_2+w_3\vec{q}_3,
\end{equation}
with non-negative $w_i$ satisfying $\sum w_i=1$. For different choices of probabilities $w_i$,  these vectors construct a boundary of $\Delta_s$ that is not in common with boundaries of the tetrahedron, see Fig.~\ref{channel}.  We will show the Pauli channel $\E_{\vec{p}}$ corresponding to the probability  $\vec{p}$ mentioned above can be compatible with $\MM$ only in an extreme case.  For this channel, we get from Eq.~\eqref{Pi}
\begin{equation}
    P_i=s\sqrt{w_i}+(1-\sqrt{1-s^2})\sqrt{w_jw_k}.
\end{equation}
This, however, implies
\begin{equation}
    P_{\max}\leq s\sqrt{w_{\max}}+(1-\sqrt{1-s^2})(\frac{1-w_{\max}}{2}),
\end{equation}
which in turn means $P_{\max}\leq s$ with the equality possible only for the case with $w_{\max}=1$. This generally contradicts Eq. \eqref{s<Pmax} except at some vertices. The same  is true for the other three comparable boundaries of $\Delta_s$. This completes the proof of the first part.

To prove the second part,  suppose the POVM is directed along a principal axis, say $\hat{\boldsymbol{x}}=(1,0,0)$.  In this case, the inequality \eqref{eqn:Incompatibilityinequality} leads  to $s\le 2(\sqrt{p_0p_1}+\sqrt{p_2p_3})$.   For the edge defined by  $p_2=p_3=0$, the inequality reduces to $s\le 2\sqrt{p_0p_1}$ which  touches the line segment between the points $\vec{q}_1$, given by Eq.~\eqref{subeq:q1}, and its permutation $\vec{q'}_{\!\!1}=(\frac{1-\sqrt{1-s^2}}{2},\frac{1+ \sqrt{1-s^2}}{2},0,0),$ on the edge. The same occurs at the edge defined by $p_0=p_1=0$. A similar result holds for the measurements along the other two principal axes  $\hat{\boldsymbol{y}}$ and $\hat{\boldsymbol{z}}$.

To complete the proof, we show  if $\PP$ intersects with $\Delta_s$ at its edges, the corresponding measurement is necessarily directed  along a principal axis. Assume the case where both sets intersect at the edge defined by $p_2=p_3=0$. This implies $P_2=P_3=0$, accordingly, $n_2=n_3=0$. This means the measurement is along $\hat{\boldsymbol{x}}$. The same result holds for the intersection on other edges and other principal axes as well, which completes the proof.

%-----------------------------------------------------------------------------------------
% Appendix C - PROOF OF THEOREM 6
%-----------------------------------------------------------------------------------------
\section{Proof of Theorem \ref{thm:quantum-resources-sharpness}}
\label{app:quantum-resources-sharpness}
We start the proof by proving Eq.~\eqref{eq: measurement quantumnes-sharpness}. To this end, we first show
 \begin{equation}
 \label{eq:quantumness-upperbound}
     \forall\hat{\boldsymbol{m}}:\quad\Q_{\P}(\mathscr{M}_{\!s, \hat{\boldsymbol{m}}}):=\max_{\P_{\!\scalemath{0.6}{\mathscr M}_{\!s, \hat{\boldsymbol{m}}}}}\ \Q(\E_{\vec{p}})\leq1-s^2.
\end{equation}
That is to say, the best quantumness over $\P_{\mathscr M_{\!s, \hat{\boldsymbol{m}}}}$ is upper-bounded with a tight bound achievable for some directions $\hat{\boldsymbol{m}}$. The proof of \eqref{eq: measurement quantumnes-sharpness} is then an immediate consequence of the invariance of $\Q(\E)$ under unitary channels, see Eqs.~\eqref{eq:quantumness-invariance} and \eqref{eq:compatible-unital}.

To prove Eq.~\eqref{eq:quantumness-upperbound}, let $\vec{p}^{\downarrow}$ denote $\vec{p}$ rearranged in a non-increasing order.  It is then easy to show among $P_i$'s given by Eq.~\eqref{Pi}, the largest one is
\begin{equation}
\label{app-eq:P-max}
    P_{\max}=2\left(\sqrt{p_0^{\downarrow}p_1^{\downarrow}}+\sqrt{p_2^{\downarrow}p_3^{\downarrow}}\right).
\end{equation}
Moreover, for any $\E_{\vec{p}}$ to belong to $\P_{\!\scalemath{0.6}{\mathscr M}_{\!s, \hat{\boldsymbol{m}}}}$,  a necessary condition  is to satisfy Eq. \eqref{s<Pmax}, which in turn implies
\begin{equation}
\label{aap-eq:s2-bound}
    1-s^2\geq1-P_{\max}^2=Q(\E_{\vec{p}})+\frac{1}{2}\left(T_2^2+T_3^2\right),
\end{equation}
where ${T}_{i}=2(\sqrt{p_0^{\downarrow} p_i^{\downarrow}}-\sqrt{p_j^{\downarrow} p_k^{\downarrow}})$ for $i,j,k$ being different choices of $(1,2,3)$. Since this has to hold for any compatible channel, it must hold for the compatible channel with maximum quantumness. This completes the proof of Eq.~\eqref{eq:quantumness-upperbound}. To see its tightness consider $\hat{\boldsymbol{m}}$ to align a principal axis. Being a tight bound is then a result of Lemma~\ref{Lemma-Polytope} implying for a POVM along a principal axis, four of the extreme points of $\Delta_s$ also belong to the set of compatible channels. While the bound is tight for all extreme points of $\Delta_s$.  Eq.~\eqref{eq: measurement quantumnes-sharpness} is now straightly obtained by the fact  that one can always find a unitary map $V_1$ to bring $\hat{\boldsymbol{n}}$ along a principal direction while $\Q$ is unitarily invariant \eqref{eq:quantumness-invariance}.

The proof of Eq.~\eqref{eq: measurement lqu-sharpness} is an immediate result of the invariance of LQU under unitary channels, Eq.~\eqref{eq:lqu-invariance}, noting that any unital channel is unitarily equivalent to a Pauli one, whereas, Eq.~\eqref{eq:lequ} gives the LQU for the latter. Restricted to the set of Pauli channels compatible with $\MM$,  Eq.~\eqref{s<Pmax} implies that the maximum achievable LQU for a compatible Pauli channel is $1-s$, and Remark~\ref{Remark1} guarantees that this bound is tight which completes the proof of Eq.~\eqref{eq: measurement lqu-sharpness} and Theorem~\ref{thm:quantum-resources-sharpness}.

%\newpage

\newpage
\bibliography{References}
\end{document}